\definecolor{prompt}{rgb}{0.9,0.0,0.0}
\numberwithin{equation}{section}
\theoremstyle{plain} \newtheorem{thm}{Theorem}[section]
\theoremstyle{plain} \newtheorem{define}{Definition}[section]
\theoremstyle{plain} \newtheorem{example}{Example}[section]
\theoremstyle{plain} \newtheorem{remark}{Remark}[section]
\newcommand{\subject}{evolutionary biology}
\newcommand{\Additive}{Additive}
\newcommand{\additive}{additive}
\newcommand{\linmodAfterInterpret}{previous section}
\newcommand{\nn}{\nonumber} % For multline label suppression
\newcommand{\opname}{Walsh}
\newcommand{\opnamecoeff}{\opname{} coefficients}
\newcommand{\opnamecoef}{\opname{} coefficient}
\newcommand{\zbopname}{zero-grounded}
\newcommand{\wt}{\opname{} transform}
\newcommand{\zbt}{\zbopname{} transform}
\newcommand{\wb}{\opname{} basis}
\newcommand{\zbb}{\zbopname{} basis}
\newcommand{\wmat}{W}
\newcommand{\zbmat}{Z}
\newcommand{\Is}[3]{{#1}_{#2}\ldots {#1}_{#3}}
\newcommand{\Iss}[1]{\Is{#1}1n}
\newcommand{\vIss}[1]{\left[\Iss{#1}\right]}
\newcommand{\vx}[1]{\left[{#1}\right]}
\newcommand{\sBasis}[1]{\{\vIss{#1}\}_{\Iss{#1} \in \Cnbits}}
\newcommand{\sumIs}[2]{\sum_{\Is{#1}{#2}n}}
\newcommand{\sumIss}[1]{\sum_{\Iss{#1}}}
\newcommand{\sumIsk}[2]{\sum_{\substack{\Is{#1}{1}{{#2}-1},\\\Is{#1}{{#2}+1}{n}}}}
\newcommand{\innerp}[2]{\left< {#1},{#2} \right>}
\newcommand{\bitDotProd}[2]{{#1}_1{#2}_1 + \cdots + {#1}_n{#2}_n}
\newcommand{\negonepwr}[2]{(-1)^{{#1}_1{#2}_1 + \cdots + {#1}_n{#2}_n}}
\newcommand{\negonepwrss}[4]{(-1)^{{#1}_{#3}{#2}_{#3} + \cdots +
{#1}_{#4}{#2}_{#4}}}
\newcommand{\negonepwradd}[3]{(-1)^{{#1}_1({#2}_1 + {#3}_1) + \cdots +
    {#1}_n({#2}_n + {#3}_n)}}
\newcommand{\negonepwraddl}[3]{(-1)^{({#1}_1 + {#2}_1){#3}_1 + \cdots +
    ({#1}_n + {#2}_n){#3}_n}}
\newcommand{\negonepwraddsl}[4]{(-1)^{({#1}_{#4} +
{#2}_{#4}){#3}_{#4} + \cdots +
    ({#1}_n + {#2}_n){#3}_n}}
\newcommand{\negonepwraddss}[5]{(-1)^{{#1}_{#4}({#2}_{#4} +
{#3}_{#4}) + \cdots +
    {#1}_{#5}({#2}_{#5} + {#3}_{#5})}}
\newcommand{\Cnbits}{\mathbb{C}^{2^n}}
\newcommand{\Cnbitsc}[1]{\mathbb{C}^{2^{#1}}}
\newcommand{\otheri}[1]{\tilde{#1}}
\newcommand{\effects}{c_0, c_1, c_2, \ldots, c_n}
\newcommand{\effected}[1]{c_0 + c_1{#1}_1 + c_2{#1}_2 + \cdots + c_n{#1}_n}
\newcommand{\po}{\preceq}
\newcommand{\potriple}[4]{\Is{#1}{1}{#4}\po\Is{#2}{1}{#4}\po\Is{#3}{1}{#4}}
\newcommand{\andop}{{\rm \&}}
\newcommand{\andopSpaced}{{\def\pad{\hskip5pt} \pad\andop\pad}}
\newcommand{\fixed}[1]{\relax^\dag{#1}}
\newcommand{\orderedSumn}[3]{\sum_{\substack{\Iss{#1}\po\\ \Iss{#2}\po\\
\Iss{#3}}}}
\newcommand{\orderedSumnFixed}[3]{\sum_{\substack{\fixed{\Iss{#1}} \po\\\Iss{#2}
\po\\\fixed{\Iss{#3}}}}}
\newcommand{\orderedSumxFixed}[4]{\sum_{\substack{\fixed{\Is{#1}{1}{#4}}
\po\\\Is{#2}{1}{#4} \po\\\fixed{\Is{#3}{1}{#4}}}}}
\newcommand{\Bigskip}{\bigskip}
\newcommand{\wted}[1]{#1^\textbf{w}}
\newcommand{\zbed}[1]{#1^\textbf{z}}
\def\wBasisElt#1{\ifcase#1
        \vx{00} + \vx{10} + \vx{01} + \vx{11}\or
        \vx{00} - \vx{10} + \vx{01} - \vx{11}\or
        \vx{00} + \vx{10} - \vx{01} - \vx{11}\or
        \vx{00} - \vx{10} - \vx{01} + \vx{11}\else
        \text{garbage}\fi}
\def\lastyn#1{\if y#1last\else\fi}
\newcommand{\pandas}{\texttt{pandas}}
\newcommand{\statsmodels}{\texttt{statsmodels}}
\newcommand{\pandaGit}{\texttt{66e3805b8cabe977f40c05259cc3fcf7ead5687d}}
\newcommand{\statsmodelsVersion}{\texttt{0.13.5}}
\newcommand{\codewin}{Listing}
\newcommand{\ack}{\section*{Acknowledgements}}
\newcommand{\tableref}{Table}
\newcommand{\varstyle}[1]{{\tt{#1}}}
\newcommand{\ls}{least-squares}
\newcommand{\Ls}{Least-squares}
\newcommand{\lsr}{\ls{} regression}
\newcommand{\Lsr}{\Ls{} regression}
\author{Devin Greene}
\affil{American University, 4400 Massachusetts Avenue, NW, Washington, DC, United States, 20016}
\title{A Primer for the \wt{}}
\date{}
\begin{document}

\maketitle

\begin{abstract}
A mathematical development of the \wt{}, \wb{}, and \opnamecoeff{} is
given.  The author was prompted to write this by a wish to give a
unified treatment of epistatic coordinates as they are used in
\subject{}.  At the end of the article, opinions are expressed
regarding the usefulness of these concepts for the practical researcher.
\end{abstract}

% \tableofcontents

\section{Introduction}

Since their introduction
\cite{WEINREICH2013700,f622c051-cd98-3197-b98c-f4515ba3978d},
higher-order epistatic coordinates have become a much discussed and used
tool in \subject{}.  The primary form of these coordinates is derived from
a linear unitary transformation called the \wt{}.  The \wt{} has an
associated basis which can be seen as providing a complete description
of the interactions among a set of binary variables.

In this article, we present a mathematical development of the \wt{} and
its associated concepts.  We also introduce the notions of the \zbt{}
and \zbb{}, which can also be used to describe interactions. The
presentation here strives to be as self-contained as possible. However,
the reader is assumed to have had at least the equivalent of an
undergraduate course in linear algebra, as well as knowing what a
partial order is.

There are several strains of ideas in the literature for dealing with
interactions.  This author was prompted to write this work by a desire
to treat these strains as a unified whole.  In Section
\ref{lastSection}, however, we suggest via example that much of the
information one can derive from \opnamecoeff{} can also be derived from
a standard regression analysis with binary input variables, and vice
versa.  We then consider the implications of this for the practical
investigator.

\section{Preliminaries}

Consider the Hilbert space

$$(\Cnbits,\innerp{\cdot}{\cdot})$$
where $\innerp{\cdot}{\cdot}$ is the standard inner product.  Since the
dimension $2^n$ is a power of two, we can index the components of
$\Cnbits$ with bit strings $\Iss{i}$ of length $n$, and do so for the
remainder of this article.  The $\Iss{i}$\,th component of a vector $v
\in \Cnbits$ is written with a subscript: $v_{\Iss{i}}$.  We use the
notation $\vIss{i}$ to denote the vector in $\Cnbits$ which is 1 at
index $\Iss{i}$ and 0 at all other indices. If we use the Kronecker
delta function,

\begin{equation*}
\delta_{x, y} =
\begin{cases}
0& \text{if $x \ne y$},\\
1& \text{if $x = y$},
\end{cases}
\end{equation*}
then we can express the components of $\vIss{i}$ as follows.

$$\vIss{i}_{\Iss{j}} = \delta_{\Iss{i},\Iss{j}}$$

Using these conventions, the inner product of two vectors $v,w \in
\Cnbits$ is given by

$$\innerp{v}{w} = \sumIss{i} v_{\Iss{i}} \overline{w_{\Iss{i}}}$$
where the bar represents complex conjugation. Note that for all bit
strings $\Iss{i}$ and $\Iss{j}$,

$$\innerp{\vIss{i}}{\vIss{j}} = \delta_{\Iss{i},\Iss{j}}$$
Thus the set $\sBasis{i}$ is an orthonormal basis of $\Cnbits$, which is
usually called the \textit{standard basis}. More generally, for all $v
\in \Cnbits$ and $\vIss{i} \in \Cnbits$,

\begin{equation*}
v_{\Iss{i}} = \innerp{v}{\vIss{i}}
\end{equation*}

Finally, we often refer to the \textit{Hamming weight} of a bit string.
This is defined as the number of 1-bits in the string.  Thus in
$\Cnbitsc{4}$, for example, the bit strings

$$0000, 0100, 0101,\,\text{and }1111$$
have Hamming weights $0, 1, 2,$ and $4$, respectively.

\section{The \wt{}}

The \wt{} $\wmat{}$ is a linear map from $\Cnbits$ to itself. Defined
using the standard basis, we have

\begin{define}[\wt{}]\label{define:walsh}
$$\wmat{} \vIss{i} = \sumIss{j} \negonepwr{i}{j} \vIss{j}$$
\end{define}

The matrix entries of $\wmat{}$ with respect to the standard basis are
powers of $-1$, where the exponent is the ``bit dot product''
$\bitDotProd{i}{j}$:

\begin{equation}\label{matrixElement}
\innerp{\wmat{}\vIss{i}}{\vIss{j}} = \negonepwr{i}{j}
\end{equation}
We show below that the \wt{} is the scalar multiple of a unitary linear
mapping, thus the image of an orthogonal basis under $\wmat{}$ is also
an orthogonal basis.

\begin{define}[\wb{}]
The \wb{} is the orthogonal basis obtained by applying the \wt{}
to the standard basis.
\end{define}

\begin{example}
We consider the case where $n = 2$.  To write the matrix for $\wmat{}$,
we use the basis order $00, 10, 01, 11$.  Then $\wmat{}$ becomes the
following self-adjoint matrix.

$$\begin{pmatrix*}[r]
1&1&1&1\\
1&-1&1&-1\\
1&1&-1&-1\\
1&-1&-1&1
\end{pmatrix*}$$
\end{example}

The \wb{} can then be read off from the rows (or columns, since
$\wmat{}$ is
self-adjoint and real) of the matrix.
Explicitly, it is the set

\begin{equation}\label{example n=4}
\begin{aligned}
\{ & \wBasisElt0,\\
     &\wBasisElt1,\\
     &\wBasisElt2,\\
     &\wBasisElt3\}
\end{aligned}
\end{equation}

\begin{remark}
It is worth noting that the \wt{} is the $n$-fold tensor product of the
Fourier transform (multiplied by a constant) on
$\mathbb{Z}/2\mathbb{Z}$.  However, we will not explore this avenue in
this article.  Readers with a background in abstract mathematics who are
familiar with tensor products and their various properties may find it
amusing to find shorter proofs than those presented here.  The
connection with the Fourier transform was also pointed out in
\cite{f622c051-cd98-3197-b98c-f4515ba3978d}.
\end{remark}

\section{Properties of the \wt{}}

We begin this section with the following theorem.

\begin{thm}\label{thm:fundamental walsh}
The \wt{} is
\begin{enumerate}[(a)]
\item Self-adjoint\label{self-adjoint}, i.e.\ $\wmat{}^* = \wmat{}$
\item Has square equal to $2^n$ times the identity, i.e.\label{orthogonal}

$$\wmat{}^2 = 2^n I$$
where $I$ is the identity mapping on $\Cnbits{}$.

\end{enumerate}
\end{thm}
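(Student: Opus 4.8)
The plan is to prove both statements directly from Definition \ref{define:walsh}, leaning on the matrix-element formula \pref{matrixElement} and on the additive structure of the bit dot product sitting in the exponent.

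For self-adjointness, I would observe that it suffices to check the identity $\innerp{\wmat{}\vIss{i}}{\vIss{j}} = \innerp{\vIss{i}}{\wmat{}\vIss{j}}$ on pairs of standard basis vectors, since these span $\Cnbits$ and both sides depend linearly (resp.\ conjugate-linearly) on the two arguments. By \pref{matrixElement} the left side is $\negonepwr{i}{j}$. For the right side I would invoke conjugate symmetry of the inner product together with the fact that every entry of $\wmat{}$ is real, which gives $\innerp{\vIss{i}}{\wmat{}\vIss{j}} = \overline{\innerp{\wmat{}\vIss{j}}{\vIss{i}}} = \negonepwr{j}{i}$. Since $\bitDotProd{i}{j}$ and $\bitDotProd{j}{i}$ agree term by term, the two sides coincide, and hence $\wmat{}^* = \wmat{}$.

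For the identity $\wmat{}^2 = 2^n I$, I would apply Definition \ref{define:walsh} twice to a basis vector: expanding $\wmat{}\vIss{i}$, applying $\wmat{}$ to each resulting $\vIss{j}$, and interchanging the two finite sums yields
$$\wmat{}^2\vIss{i} = \sumIss{\ell}\left(\sumIss{j}\negonepwraddl{i}{\ell}{j}\right)\vIss{\ell},$$
so the whole problem reduces to evaluating the inner character sum over $\Iss{j}$ for fixed $\Iss{i}$ and $\Iss{\ell}$.

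The crux — the one step that is not pure bookkeeping — is to recognize that, because the exponent is itself a sum over the $n$ bit positions, the character sum factorizes into a product of independent one-bit sums:
$$\sumIss{j}\negonepwraddl{i}{\ell}{j} = \prod_{k=1}^n \sum_{j_k \in \{0,1\}} (-1)^{(i_k + \ell_k)j_k} = \prod_{k=1}^n\bigl(1 + (-1)^{i_k + \ell_k}\bigr).$$
Each factor equals $2$ when $i_k = \ell_k$ and $0$ otherwise, so the product is $2^n$ exactly when $\Iss{i} = \Iss{\ell}$ and vanishes otherwise; that is, it equals $2^n\delta_{\Iss{i},\Iss{\ell}}$. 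Substituting back collapses the outer sum to its single surviving term $\Iss{\ell} = \Iss{i}$, giving $\wmat{}^2\vIss{i} = 2^n\vIss{i}$ for every basis vector, and therefore $\wmat{}^2 = 2^n I$. I expect the factorization to be the only place demanding genuine care; it can be justified cleanly either by induction on $n$ or by expanding the product over $\{0,1\}^n$ via the distributive law, after which the sign arithmetic and index swaps are entirely routine.
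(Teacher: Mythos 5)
Your argument is correct, and part (a) matches the paper's proof essentially verbatim: both verify $\innerp{\wmat{}\vIss{i}}{\vIss{j}} = \innerp{\vIss{i}}{\wmat{}\vIss{j}}$ on standard basis pairs using \pref{matrixElement}, the symmetry of the bit dot product, and the realness of the entries. For part (b), however, you take a genuinely different route. The paper first invokes (a) to write $\wmat{}^2 = \wmat{}\wmat{}^*$, reducing the claim to a statement about the rows of the matrix, which it then splits into two cases: each row has squared norm $2^n$ (immediate, since every entry has modulus $1$), and distinct rows are orthogonal, which it proves by extracting only the single coordinate sum $\sum_{j_1}(-1)^{(i_1 + \otheri{i}_1)j_1}$ at one position where the rows differ and observing that this factor vanishes. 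You instead compute $\wmat{}^2\vIss{i}$ directly from Definition \ref{define:walsh}, without using (a) at all, and factorize the resulting character sum completely into the product $\prod_{k=1}^n\bigl(1 + (-1)^{i_k+\ell_k}\bigr) = 2^n\delta_{\Iss{i},\Iss{\ell}}$, which evaluates the diagonal and off-diagonal entries uniformly in one closed formula. Your full factorization is in effect the $n$-fold tensor decomposition of $\wmat{}$ that the paper's remark alludes to when it suggests readers familiar with tensor products ``may find it amusing to find shorter proofs''; it buys a self-contained proof of (b) and a single unified computation. The paper's minimal single-coordinate extraction buys slightly lighter notation, since it never needs the product over all $n$ positions, and it reuses (a) to avoid expanding $\wmat{}^2$ directly. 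Both arguments ultimately rest on the same fact — that sums over $\{0,1\}^n$ separate coordinate-wise, which the paper states as ``sums may be separated'' — so neither has a gap; yours simply exploits that separability everywhere rather than at one bit.
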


\noindent
Note that it follows immediately from Theorem~\ref{thm:fundamental
walsh} that $U = \frac{1}{2^{n/2}}\wmat{}$ is unitary, i.e.\ $UU^{*} =
U^{*}U = I$.

\begin{proof}
Item \eqref{self-adjoint} is seen by noting that the inner product is
symmetric on vectors with real components and by considering the
following equations.

\begin{align*}
\innerp{\wmat{}\vIss{i}}{\vIss{j}} &= \negonepwr{i}{j}\quad\text{by
\eqref{matrixElement}}\\
&= \negonepwr{j}{i}\\
&= \innerp{\wmat{}\vIss{j}}{\vIss{i}}\quad\text{also by
\eqref{matrixElement}}\\
&= \innerp{\vIss{i}}{\wmat{}\vIss{j}}
\end{align*}
Since the standard basis spans $\Cnbits$, it follows
that $\wmat{}$ is self-adjoint.

Looking now at \eqref{orthogonal}, we see from \eqref{self-adjoint} that
$\wmat{}^2 = \wmat{} \wmat{}^*$.  Hence it suffices to show that the
rows of the matrix \eqref{matrixElement} form an orthogonal set of
vectors in $\Cnbits$ all with squared norm $2^n$.

The squared norm of each row is given by

$$\sumIss{j} \left|\negonepwr{i}{j}\right|^2 = \sumIss{j} 1,$$
which is obviously $2^n$, proving the claim about the norm of each row.

For the claim of orthogonality of the rows, let $\Iss{i}$ and
$\Iss{\otheri{i}}$ be two unequal bit strings corresponding to two
different rows.  The order of the bits is
just labeling, so we may assume that $i_1 \neq {\otheri{i}}_1$.  The dot
product of the two rows is

\begin{align*}
\sumIss{j} \negonepwr{i}{j} \negonepwr{\otheri{i}}{j} &=
    \sumIss{j} \negonepwraddl{i}{\otheri{i}}{j}
\end{align*}
The set of bit strings $\Iss{j}$ is the $n$-fold Cartesian product of
the set $\{0,1\}$.  Thus sums may be separated.  Rewriting, we have

\begin{multline}
\sumIss{j} \negonepwraddl{i}{\otheri{i}}{{j}} \\
          = \sum_{j_1 \in \{0,1\}} (-1)^{(i_1 + \otheri{i}_1)j_1}
    \sumIs{j}{2}\negonepwraddsl{i}{\otheri{i}}{j}{2}\hfil
\end{multline}
Since $i_1 + \otheri{i}_1 = 1$, the first sum on the right side of the
equation is zero, thus the entire expression is zero, demonstrating
orthogonality.
\end{proof}

\begin{remark}
For the remainder of the article, we adopt a notational convention.
When the \wt{} $\wmat{}$ is applied to a standard basis vector
$\vIss{i}$, we denote the image by applying a bold-faced `\textbf{w}'
superscript:

$$\wmat{} \vIss{i} = \wted{\vIss{i}}$$

For further clarity, we also the adopt the convention of using Greek
letters to denote bits whenever they are contained in an expression
with `\textbf{w}' superscript. Thus a \wb{}
vector might be written as
$\wted{\vIss{\alpha}}$ or $\wted{\vIss{\beta}}$, for
example.
\end{remark}

Finally, we follow the terminology of \cite{WEINREICH2013700} and call
the components of a vector written in terms of the \wb{}
\textit{\opnamecoeff}.

\section{An interpretation of the \wb{}}\label{interpretationSection}

We describe a statistical interpretation of the \wb{}.  This is a
trivial generalization from the biological use of epistasis with
bi-allelic systems to any control-treatment scenario involving several
treatments.

Indeed, we can think of each bit string $\Iss{i}$ as determining membership in
various ``control'' and ``treatment'' groups. Thus, for instance, the
bit string $10110$ would correspond to the group given treatments 1, 3,
and 4 but not treatments 2 and 5. A vector in $\Cnbits$ can be thought
of as the numerical results of all possible combinations of treatments.

For example, let $n = 2$, and let $v$ be a vector in $\Cnbitsc{2}$
corresponding to the results of an experiment involving two treatments.
The average effect of Treatment 1 may be expressed as follows.

$$\frac{ (v_{10} - v_{00} ) + ( v_{11} - v_{01} ) }{2}$$
Referring to \eqref{example n=4}, we see that this is the negative of
half the inner product of $v$ with one of the \wb{} vectors.

$$(v_{10} - v_{00} ) + ( v_{11} - v_{01} ) =
    -\innerp{v}{\wBasisElt1}$$
In general, for any bit string with Hamming weight $1$ with the 1-bit in
the $n$th position, its image under $\wmat{}$ corresponds to the average
effect of Treatment $n$.

How can we interpret the ``second order'' \wb{} vector in
\eqref{example n=4} (the last one)?  The \opnamecoef{} is given by

$$\innerp{v}{\wBasisElt3} =
    v_{00} - v_{01} - v_{10} + v_{11}$$
We can rewrite this as

$$
(v_{11} - v_{00}) - (v_{10} - v_{00}) - (v_{01} - v_{00})
$$
This is readily interpreted as ``The effect of both treatments on the
control group 00 after subtracting the effects of the individual
treatments on 00.'' We can see from this analysis how these
representations capture the notion of ``interaction'' among treatments,
where we wish to measure the effect of two treatments independently of
the individual effects.

A similar analysis can be made with third order or higher
\opnamecoeff{}.

\section{\Additive{} Models}

\Lsr{} models usually assume, prior to the inclusion of interaction
terms, that the output variable is a linear function of the input
variables plus noise.  The interpretation presented in the
\linmodAfterInterpret{} suggests the following definition.

\begin{define}\label{define linear}
We say that $v \in \Cnbits$ is \textit{additive}, or \textit{linear}, if
there exist constants $\effects$ such that for any
bit string $\Iss{i}$

$$v_{\Iss{i}} = \effected{i}$$
\end{define}

\noindent
Using the language of the \linmodAfterInterpret{}, this expresses that the
effect of a combination of treatments is the sum of the effects of the
individual treatments, i.e.\ no interactions.

The main result of this section is the following theorem.

\begin{thm}
A vector $v \in \Cnbits$ is \additive{} if and only if it is in the subspace
spanned by the \wb{} elements

\begin{equation}\label{linear basis elements}
\wted{\vx{00\ldots0}}, \wted{\vx{10\ldots0}},
\wted{\vx{01\ldots0}},\,\ldots \wted{\vx{00\ldots1}}
\end{equation}
i.e.\ the subspace spanned by the image under $\wmat{}$of the standard basis
elements with Hamming weight $\le 1$.
\end{thm}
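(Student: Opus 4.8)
The plan is to prove the equality of two subspaces of $\Cnbits$: the set $A$ of \additive{} vectors of Definition~\ref{define linear} and the span $S$ of the \wb{} elements listed in \eqref{linear basis elements}. Both are easily checked to be linear subspaces of $\Cnbits$ (for $A$: sums and scalar multiples of affine expressions in the bits are again affine; for $S$: it is a span by construction). The whole argument rests on one elementary identity, namely $(-1)^{i_k} = 1 - 2 i_k$, valid for any bit $i_k \in \{0,1\}$, which converts the $\pm 1$ ``sign'' form produced by $\wmat{}$ into the affine ``linear'' form of Definition~\ref{define linear}.

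First I would compute the components of a general element of $S$. By Definition~\ref{define:walsh}, the weight-zero vector $\wted{\vx{00\ldots0}}$ has every component equal to $1$, while the weight-one standard basis vector whose single $1$-bit lies in position $k$ has image under $\wmat{}$ with $\Iss{j}$th component $\negonepwr{i}{j}$ evaluated at the string $i$ with $i_k = 1$ and all other bits $0$; this collapses to $(-1)^{j_k}$. Hence, writing an arbitrary element of $S$ as $a_0 \wted{\vx{00\ldots0}} + \sum_{k=1}^n a_k \wted{\vIss{i}}$ (the $k$th term being the weight-one vector with its $1$ in position $k$), its $\Iss{j}$th component is $a_0 + \sum_{k=1}^n a_k (-1)^{j_k}$. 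Substituting $(-1)^{j_k} = 1 - 2 j_k$ turns this into $\left(a_0 + \sum_{k=1}^n a_k\right) - 2\sum_{k=1}^n a_k j_k$, which is exactly the additive form $\effected{j}$ with $c_0 = a_0 + \sum_{k=1}^n a_k$ and $c_k = -2 a_k$ for $k \ge 1$. This establishes the inclusion $S \subseteq A$.

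For the reverse inclusion I would read the same relations backwards. The system $c_k = -2 a_k$ $(k \ge 1)$ together with $c_0 = a_0 + \sum_{k=1}^n a_k$ is triangular and therefore invertible: given any constants $\effects$, setting $a_k = -c_k/2$ for $k \ge 1$ and $a_0 = c_0 + \tfrac{1}{2}\sum_{k=1}^n c_k$ produces an element of $S$ whose components are precisely the prescribed additive expression. Thus $A \subseteq S$, and the two subspaces coincide, which is the assertion of the theorem.

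I do not expect a genuine obstacle here, since everything reduces to the single substitution $(-1)^{j_k} = 1 - 2 j_k$ followed by matching coefficients. The only points demanding a little care are the bookkeeping that separates the constant index $k = 0$ from the linear indices $k \ge 1$ (the two enter the change of coordinates differently), and, should one prefer a dimension count in place of the explicit inversion, the observation that the $n+1$ spanning \wb{} vectors are linearly independent --- which is immediate from Theorem~\ref{thm:fundamental walsh}, as they are members of an orthogonal basis, so that $\dim S = n+1 = \dim A$ and one inclusion already forces equality.
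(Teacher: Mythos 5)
Your proposal is correct, but it proves the theorem by a genuinely different route than the paper. The paper's proof is a dimension-plus-orthogonality argument: it observes that the additive vectors form a subspace of dimension $n+1$, equal to the dimension of the span of the $n+1$ listed \wb{} vectors (which are independent, being part of an orthogonal basis), and then establishes the one needed inclusion by showing $\innerp{v}{\wted{\vIss{\alpha}}} = 0$ for every additive $v$ and every $\Iss{\alpha}$ of Hamming weight $\ge 2$; that vanishing is proved by expanding the inner product termwise, choosing a position $s$ with $\alpha_s = 1$ distinct from the term index $r$, and factoring out the null sum $\sum_{i_s}(-1)^{i_s} = 0$. You instead compute the components of a general element of the span directly and use the substitution $(-1)^{j_k} = 1 - 2j_k$ to exhibit an explicit triangular (hence invertible) change of coordinates $c_0 = a_0 + \sum_{k} a_k$, $c_k = -2a_k$ between span coefficients and additive constants, obtaining \emph{both} inclusions constructively, with no inner products, no cancellation argument, and no dimension count (which you correctly note would serve as an alternative shortcut, since orthogonality gives independence of the spanning set). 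Your approach is more elementary and buys explicit formulas relating the first-order \opnamecoeff{} to the additive effects, which meshes nicely with the statistical interpretation of the \wb{}; the paper's approach buys the sharper intermediate fact --- directly useful for detecting interactions --- that additivity forces every \opnamecoef{} of order $\ge 2$ to vanish, a statement your subspace equality recovers only after invoking orthogonality of the \wb{} again. Both arguments are complete; the only minor blemish in yours is notational (writing $\wted{\vIss{i}}$ for the $k$th weight-one image without indexing the string by $k$, and using Latin rather than the paper's Greek-letter convention for superscripted vectors), which does not affect correctness.
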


\begin{proof}
\newcounter{linearproof}
It is clear from Definition \ref{define linear} that the set of
\additive{} vectors in $\Cnbits$ form a subspace of dimension $n + 1$.
Since the \wb{} elements in \eqref{linear basis elements} span a
subspace of equal dimension, it suffices to show that if $v$ is
additive, then for every bit string $\Iss{\alpha}$ with Hamming weight
$\ge 2$,

\begin{equation}\label{linearproof1}
\innerp{v}{\wted{\vIss{\alpha}}} = 0
\end{equation}

To see this, note that from Definition \ref{define linear} and \eqref{matrixElement} one
can write \eqref{linearproof1} as

\begin{equation}\label{linearproof2}
\sumIss{i} \negonepwr{i}{\alpha}(\effected{i})
\end{equation}
Defining $i_0 = 1$, and changing the order of the
summation, \eqref{linearproof2} can be written as

\begin{equation}\label{linearproof3}
\sum_{r=0}^n \sumIss{i} \negonepwr{i}{\alpha} c_{i_r} i_r
\end{equation}
Now consider a single term in the summation indexed by $r$.

\begin{equation}\label{rterm}
\sumIss{i} \negonepwr{i}{\alpha} c_{i_r} i_r
\end{equation}
From the assumption that $\Iss{\alpha}$ has Hamming weight $\ge 2$,
there must be an $s \ne r$ such that $\alpha_s = 1$.  Taking
$(-1)^{i_s\alpha_s}$ out as a separate summation, we rewrite
\eqref{rterm} as

\begin{multline*}
\sum_{i_s}(-1)^{i_s}
\sumIsk{i}{s} \negonepwrss{i}{\alpha}{1}{s-1} \negonepwrss{i}{\alpha}{s+1}{n} c_{i_r} i_r
\\
\end{multline*}
The leftmost summation is clearly equal to zero, hence the entire
expression is zero, as is the value of \eqref{linearproof2}.
\end{proof}

\section{The \zbb{}}

It is clear from Definition \ref{define:walsh} that the standard basis
expression of each \wb{} vector includes all the elements of the
standard basis among its terms, with coefficients either -1 or 1.
Researchers who consider using the Walsh basis to detect interactions
will find this disadvantageous given that having sufficient data for all
$2^n$ treatment combinations is unlikely when $n$ is large. This is
particularly true for geneticists, who are likely to be dealing with a
population where the set of detected genotypes may consist only of those
with a few out of a much larger set of possibilities.  In short, the
Hamming weight of bit strings representing treatment outcomes that one
\textit{sees} tends to be small.

An alternative is to use the \textit{\zbb{}}, a term we introduce here.
Like the \wb{}, it can be defined as the image of the standard basis
under a linear mapping $\zbmat{}$.  This basis has been used in
theoretical work, for example the authors of \cite{Dai_2013} do so
implicitly in their investigation of multivariate Bernoulli
distributions.

Before proceeding with the definition of $\zbmat{}$, we need the
following definition.

\begin{define}
Define the partial order $\po$ on bit strings of length $n$ as

\begin{equation*}\label{partialOrder}
\Iss{j} \po \Iss{i} \iff (\forall k\; j_k = 1 \implies i_k = 1)
\end{equation*}
Equivalently, if $\andop$ is the bit-wise ``and'' operation,
$$\Iss{j} \po \Iss{i} \iff \Iss{j} \andopSpaced \Iss{i} = \Iss{j}$$
\end{define}

Expressed differently yet again, $\Iss{j} \po \Iss{i}$ when $\Iss{j}$
has no bits set to 1 which aren't set to 1 in $\Iss{i}$, or rather the
support of $\Iss{j}$ lies in the support of $\Iss{i}$, where by
\textit{support} we mean the location of bits set to 1.

\begin{define}\label{0basedDef}
The \zbb{} is defined as the image of the standard basis under the
following linear transformation.

\begin{equation}\label{define:0based}
\zbmat \vIss{i} = \sum_{\Iss{j} \po \Iss{i}} \negonepwr{i}{j} \vIss{j}
\end{equation}
or, written as matrix elements with respect to the standard basis,
\begin{equation}\label{matrixElement0based}
\innerp{\zbmat \vIss{i}}{\vIss{j}} = \sum_{\Iss{j} \po \Iss{i}} \negonepwr{i}{j}
\end{equation}
\end{define}

Note the difference between $\zbmat{}$ and $\wmat$.  With $\zbmat{}$,
the sum is taken only over those bit strings $\Iss{j}$ whose support
lies in the support of $\Iss{i}$.

\begin{remark}
As with the operator $\wmat{}$, we adopt a notation using a superscript to
denote vectors of the form $\zbmat \vIss{i}$, using `\textbf{z}` rather
than `\textbf{w}'.  We use the Greek letter convention here as well.  Thus

$$\zbmat \vIss{\alpha} = \zbed{\vIss{\alpha}}$$
\end{remark}

\begin{example}
Let $n = 3$.  Here is a list of the \zbb{} vectors.

\begin{equation*}
\begin{aligned}
\zbed{\vx{000}} &= \vx{000}\\
\zbed{\vx{100}} &= \vx{000} - \vx{100}\\
\zbed{\vx{010}} &= \vx{000} - \vx{010}\\
\zbed{\vx{001}} &= \vx{000} - \vx{001}\\
\zbed{\vx{110}} &= \vx{000} - \vx{100} - \vx{010} + \vx{110}\\
\zbed{\vx{101}} &= \vx{000} - \vx{100} - \vx{001} + \vx{101}\\
\zbed{\vx{011}} &= \vx{000} - \vx{010} - \vx{001} + \vx{011}\\
\zbed{\vx{111}} &= \vx{000} - \vx{010} - \vx{010} - \vx{001}
+ \vx{110} + \vx{101} + \vx{011} - \vx{111}
\end{aligned}
\end{equation*}
Compare this to the \wb{} vectors with the same number of dimensions.

\begin{equation*}
\begin{aligned}
\wted{\vx{000}} &= \vx{000} + \vx{100} + \vx{010} + \vx{001} + \vx{110}
+ \vx{101} + \vx{011} + \vx{111}\\
\wted{\vx{100}} &= \vx{000} - \vx{100} + \vx{010} + \vx{001} - \vx{110}
- \vx{101} + \vx{011} - \vx{111}\\
\wted{\vx{010}} &= \vx{000} + \vx{100} - \vx{010} + \vx{001} - \vx{110}
+ \vx{101} - \vx{011} - \vx{111}\\
\wted{\vx{001}} &= \vx{000} + \vx{100} + \vx{010} - \vx{001} + \vx{110}
- \vx{101} - \vx{011} - \vx{111}\\
\wted{\vx{110}} &= \vx{000} - \vx{100} - \vx{010} + \vx{001} + \vx{110}
- \vx{101} - \vx{011} + \vx{111}\\
\wted{\vx{101}} &= \vx{000} - \vx{100} + \vx{010} - \vx{001} - \vx{110}
+ \vx{101} - \vx{011} + \vx{111}\\
\wted{\vx{011}} &= \vx{000} + \vx{100} - \vx{010} - \vx{001} - \vx{110}
- \vx{101} + \vx{011} + \vx{111}\\
\wted{\vx{111}} &= \vx{000} - \vx{100} - \vx{010} - \vx{001} + \vx{110}
+ \vx{101} + \vx{011} - \vx{111}\\
\end{aligned}
\end{equation*}
\end{example}

We show that the \zbb{} is in fact a basis by showing that $\zbmat{}$ is
invertible.  Indeed, it is its own inverse.

\begin{thm}\label{thm:fundamental 0based}
$\zbmat^2 = I$
\end{thm}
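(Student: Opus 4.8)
The plan is to compute the matrix of $\zbmat^2$ directly in the standard basis and to show that every off-diagonal entry vanishes while every diagonal entry equals $1$. Applying Definition~\ref{0basedDef} twice and using linearity, I would write
$$\zbmat^2 \vIss{i} = \sum_{\Iss{j}\po\Iss{i}} \negonepwr{i}{j} \sum_{\Iss{k}\po\Iss{j}} \negonepwr{j}{k} \vIss{k},$$
so that the coefficient of $\vIss{k}$ in $\zbmat^2 \vIss{i}$ is $\sum_{\Iss{k}\po\Iss{j}\po\Iss{i}} \negonepwr{i}{j}\negonepwr{j}{k}$, the index running over all $\Iss{j}$ sandwiched between $\Iss{k}$ and $\Iss{i}$. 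The goal is then to prove that this sum equals $\delta_{\Iss{i},\Iss{k}}$.

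First I would dispose of the case $\Iss{k} \not\po \Iss{i}$: by transitivity of $\po$, no $\Iss{j}$ can satisfy $\Iss{k}\po\Iss{j}\po\Iss{i}$, the index set is empty, and the coefficient is $0$, exactly as wanted since $\Iss{i}\ne\Iss{k}$ here. So assume $\Iss{k}\po\Iss{i}$ and analyze the admissible $\Iss{j}$ position by position. At a position $\ell$ with $i_\ell = 0$ the condition $\Iss{j}\po\Iss{i}$ forces $j_\ell = 0$; at a position with $k_\ell = 1$ (hence $i_\ell = 1$) the condition $\Iss{k}\po\Iss{j}$ forces $j_\ell = 1$; and at the remaining positions, those with $i_\ell = 1$ and $k_\ell = 0$, the bit $j_\ell$ ranges freely over $\{0,1\}$.

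Next I would simplify the exponent. Combining the two powers of $-1$ gives total exponent $\sum_\ell (i_\ell + k_\ell) j_\ell$. On each forced position this contribution is even (it is $0$ when $j_\ell = 0$, and $(1+1)\cdot 1$ when $k_\ell = 1$), so modulo $2$ the exponent reduces to $\sum_\ell j_\ell$ over the free positions, where $i_\ell + k_\ell = 1$. The sum over $\Iss{j}$ therefore factors, just as in the proof of Theorem~\ref{thm:fundamental walsh}, into a product of independent one-bit sums $\sum_{j_\ell \in \{0,1\}} (-1)^{j_\ell}$, one for each free position.

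Finally, each such factor is zero, so the entire sum vanishes whenever at least one free position exists. The free positions are precisely those in the support of $\Iss{i}$ but not of $\Iss{k}$; since $\Iss{k}\po\Iss{i}$, there are none exactly when $\Iss{i} = \Iss{k}$, in which case the empty product is $1$. This gives the coefficient $\delta_{\Iss{i},\Iss{k}}$ and hence $\zbmat^2 = I$. I expect the only point needing care to be the bookkeeping of the three kinds of bit positions and the check that the forced positions contribute an even exponent; once that is in place, the collapse of the free-position sum is the same vanishing-sum phenomenon already exploited for the \wt{}.
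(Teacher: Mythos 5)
Your proposal is correct and complete, but it takes a genuinely different route from the paper's proof. Both arguments begin identically: expanding $\zbmat^2\vIss{i}$ twice identifies the coefficient of $\vIss{k}$ as $\sum_{\Iss{k}\po\Iss{j}\po\Iss{i}}\negonepwradd{j}{i}{k}$, and both dispose of the case $\Iss{k}\not\po\Iss{i}$ by noting the index set is empty. From there the paper proceeds by induction on $n$: it splits off the last bit, observes that when $i_{n+1}=k_{n+1}$ the squeezed bit $j_{n+1}$ contributes an even exponent (reducing the claim to the inductive hypothesis), and that when $i_{n+1}=1$ and $k_{n+1}=0$ the two admissible values of $j_{n+1}$ produce cancelling signs. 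You instead argue directly and globally: you classify positions as forced-zero ($i_\ell=0$), forced-one ($k_\ell=1$, whence $i_\ell=1$), or free ($i_\ell=1$, $k_\ell=0$), verify that the forced positions contribute an even exponent, and factor the sum over the interval between $\Iss{k}$ and $\Iss{i}$ into independent one-bit sums $\sum_{j_\ell\in\{0,1\}}(-1)^{j_\ell}=0$, one per free position, with the empty product equal to $1$ exactly when $\Iss{i}=\Iss{k}$. The underlying cancellation is the same, but your version buys a non-inductive argument that makes the combinatorial structure explicit --- the admissible $\Iss{j}$ form a subcube of dimension equal to the difference of the Hamming weights of $\Iss{i}$ and $\Iss{k}$ --- and it reuses verbatim the separation-of-sums device from the paper's proof of Theorem~\ref{thm:fundamental walsh}, as you point out. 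The paper's induction, conversely, avoids the three-way position bookkeeping by processing one coordinate per inductive step; in effect, its inductive step is your position-by-position analysis unrolled one bit at a time.
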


\begin{proof}
Expanding twice from Definition \ref{0basedDef}, we get for all
$\Iss{i}$

\begin{multline}\label{compactSum}
\zbmat^2\vIss{i} = \zbmat \sum_{\Iss{j} \po \Iss{i}} \negonepwr{i}{j}
\vIss{j}\\
= \sum_{\Iss{j} \po \Iss{i}} \negonepwr{i}{j}
\zbmat{\vIss{j}}\\
= \sum_{\Iss{j} \po \Iss{i}} \negonepwr{i}{j} \sum_{\Iss{k} \po
\Iss{j}} \negonepwr{j}{k} \vIss{k}\\
= \orderedSumn{k}{j}{i}
\negonepwradd{j}{i}{k} \vIss{k}
\end{multline}

In an effort to keep the summation notation comprehensible, we adopt the
convention that when an expression in a summation index is fixed
relative to the summation, then it is preceded by a raised $\dag$.

From \eqref{compactSum}, it suffices to
show that

\begin{equation}\label{inductionHyp}
\orderedSumnFixed{k}{j}{i}
\negonepwradd{j}{i}{k} = \delta_{\Iss{i},\,\Iss{k}}
\end{equation}
for any $\Iss{i}$, $\Iss{k}$.  We demonstrate this by induction on $n$.

The case $n = 0$ is trivial, indeed vacuous.  Assume
\eqref{inductionHyp} is valid:  We show that the case $n + 1$:

\begin{equation}\label{inductionConc}
\orderedSumxFixed{k}{j}{i}{n+1}
\negonepwraddss{j}{i}{k}{1}{n+1} = \delta_{\Is{i}{1}{n+1},\,\Is{k}{1}{n+1}}
\end{equation}
follows.

Consider a pair $\Is{i}{1}{n+1}$ and $\Is{k}{1}{n+1}$. If it is not the
case that $\Is{k}{1}{n+1} \po \Is{i}{1}{n+1}$, then
\eqref{inductionConc} is trivially true, since the summation has no
terms and is thus zero. So we assume it \textit{is} the case that
$\Is{k}{1}{n+1} \po \Is{i}{1}{n+1}$.  We consider all possible terms in
the summation which have the form

$$\potriple{k}{j}{i}{n+1}$$
If the first $n$ bits of each such pair are given, what are the
possibilities for the final bits, $i_{n+1}$ and $k_{n+1}$?  If they are
the same, then $j_{n+1}$ is squeezed between them, and so
$j_{n+1}(i_{n+1} + k_{n+1})$ is either 0 or 2, which does not affect the
power of -1, and which reduces both sides to \eqref{inductionConc} to
the induction hypothesis \eqref{inductionHyp}

The other possibility is that $k_{n+1} = 0$ and $i_{n+1} = 1$.  Then
both values of $j_{n+1}$ are possible, and will each occur once in the
summation (recall that the first $n$ bits of each bit string involved is
fixed).  When $j_{n+1} = 0$, we have $j_{n+1}(i_{n+1} + k_{n+1}) = 0$.
When $j_{n+1} = 1$, we have $j_{n+1}(i_{n+1} + k_{n+1}) = 1$. Thus the
pair of powers of -1 cancel each other, and the total sum
\eqref{inductionConc} is zero, as it should be since $\Is{i}{1}{n+1} \ne
\Is{k}{1}{n+1}$.
\end{proof}

\section{Relation with \lsr{}}\label{lastSection}

An alternative way of analyzing interactions is to use \lsr{}.  This
would treat the values for each bit string as data points, and consider
the binary input variables described by ``bit 1 set'', ``bit 2 set'',
etc.  In this section we provide evidence that regression and
\opnamecoeff{} are two ways of doing the same thing.

We offer no rigorous general proofs.  That would require going into the
technical details of regression algorithms.  Rather, we illustrate with
a toy example that the quantities of interest produced by a regression
analysis can be expressed in terms of \opnamecoeff{}, and vice versa.

For our regression analysis, we use two widely used Python language open
source packages: \pandas{}\footnote{Git version \pandaGit{}} to
construct data frames, and \statsmodels{}\footnote{Version
\statsmodelsVersion{}} to perform the \ls{} algorithm.  In parallel, we
show how the \opnamecoeff{} readily replicate the output of
\statsmodels{}' regression algorithm.

\begin{table}[ht]
\caption{Data example}\label{Data example}
\begin{tabular}{lr}
000: &100\\
100: &77\\
010: &115\\
001: &113\\
110: &93\\
101: &72\\
011: &129\\
111: &86\\
\end{tabular}\label{toyDataExample}
\end{table}

Our example data is shown in \tableref{} \ref{Data example}.  In the
context of regression, we view the list as the set of data points. We
have three input variables, \varstyle{bit1}, \varstyle{bit2}, and
\varstyle{bit3}, corresponding to whether \varstyle{bit}\textit{i} is
set. The output variable is the set of values in the right-hand column.
We call this variable \varstyle{val}.  In the listings beginning on page
\pageref{listings}, we record parts of a Python session. We use \pandas{}'
\texttt{DataFrame} constructor to gather the data in \tableref{}
\ref{toyDataExample} and form the data frame object in \codewin{}
\ref{DataFrame}

Next, in \codewin{} \ref{regression}, we fit a regression model to
predict \varstyle{val} as a linear function of the three \varstyle{bit} variables.
Here we use \statsmodels' ordinary \lsr{} routine and compute the
predicted value for each data point.

We show that the predicted values are readily computed from the
\opnamecoeff{} of this system. Using Definition \ref{define:walsh} along
with standard change of basis calculations from linear algebra, we can
rewrite

\begin{align*}
100 \vx{000} + 77 \vx{100} + 115 \vx{010} + 113 \vx{001}\\
+ 93\vx{110} + 72\vx{101}
+ 192\vx{011} + 86\vx{111}\hfil
\end{align*}
as
\begin{align}\label{walshedExample}
785\wted{\vx{000}} + 129\wted{\vx{100}} - 61\wted{\vx{010}} -
15\wted{\vx{001}}\\
- \wted{\vx{110}} - 39\wted{\vx{101}} -
\wted{\vx{011}} + 3\wted{\vx{111}}\nn
\end{align}
We then orthogonally project onto the subspace generated by $\vx{000}$,
$\vx{100}$, $\vx{010}$, and $\vx{001}$, with the view that these
elements correspond to the input variables used in the regression.  (In
nearly all applications of regression, here as well, the constant term
is an implicit input variable.)  The projected vector is

\begin{equation}\label{projectedVector}
785\wted{\vx{000}} + 129\wted{\vx{100}} - 61\wted{\vx{010}} -
15\wted{\vx{001}}
\end{equation}
Note that we are relying on the fact that the \wb{}
is orthogonal, so that setting components of it to zero constitutes an
orthogonal projection onto a subspace.

When rewritten using the standard basis, \eqref{projectedVector} becomes

\begin{align*} 104.75 \vx{000} + 72.5 \vx{100} + 120 \vx{010} + 108.5
\vx{001}\\ + 87.75\vx{110} + 76.25\vx{101} + 123.75\vx{011} +
91.5\vx{111},
\end{align*}
We note that the values which appear here are
precisely the predicted values gotten from the regression in \codewin{}
\ref{regression}.

We also consider the $R^2$ parameter for the regression shown at the
bottom of \codewin{} \ref{regression}.  Normally, the precision
displayed would be considered unnecessary and perhaps silly, but in this
case it is appropriate since we are checking whether deriving values
from the \opnamecoeff{} replicates precisely the result of the
algorithm.

We can recover $R^2$ from the \opnamecoeff{} by taking the quotient of
the sum of the squares of the one bit \opnamecoeff{} with the sum of the
squares of the non-constant \opnamecoeff{}.

\begin{equation*}
\frac{129^2 + 61^2 + 15^2}
    {129^2 + 61^2 + 15^2 + 110^2 + 39^2 + 1^2 + 3^2}
    = 0.9307382793073828,
\end{equation*}
which is exactly $R^2$ derived from the regression algorithm.

Now consider a regression model which includes two-way interactions,
shown in \codewin{} \ref{twoway}.  In a way similar to what we did with
the linear term-only model, we start with the \opnamecoeff{} in
\eqref{walshedExample} and project onto the subspace generated by the
constant, first order, and second order \wb{} elements:

\begin{equation*}
785\wted{\vx{000}} + 129\wted{\vx{100}} - 61\wted{\vx{010}} -
15\wted{\vx{001}} - \wted{\vx{110}} - 39\wted{\vx{101}} -
\wted{\vx{011}}
\end{equation*}
Reverting to the standard basis gives

\begin{align*}
99.625 \vx{000} + 77.375 \vx{100} + 115.375 \vx{010} + 113.375
\vx{001}\\
+ 92.625\vx{110} + 71.625\vx{101} + 128.625\vx{011} + 86.375\vx{111},\hfil
\end{align*}
which contains exactly the predicted values in \codewin{} \ref{twoway}

As before, we look at the $R^2$ generated by the regression algorithm.
It produces the value found at the bottom of \codewin{}
\ref{twoway}.  (See the previous comment about the precision.)

We can recover this though the \opnamecoeff{} by taking the quotient of
the sum of squares of all but the constant and third order
\opnamecoeff{} with the sum of the squares of all but the constant
\opnamecoef{}.

\begin{equation*}
\frac{129^2 + 61^2 + 15^2 + 1^2 + 39^2 + 1^2}{129^2 + 61^2 + 15^2 + 1^2 + 39^2
+ 1^2 + 3^2} = 0.9995931099959311,
\end{equation*}
which, as with the previous example, is exactly the $R^2$ generated by
algorithm.

\Bigskip

The guts of any \lsr{} calculation explicitly or implicitly involves the
orthogonalization of a matrix whose columns correspond to the input
variables of the model.  What the example in this section illustrates is
the connection between that process and the \wt{}.

There has been discussion in the \subject{} literature concerning the
usefulness of \opnamecoeff{} or interaction coordinates in general for
prediction and modeling \cite{sailer2018uninterpretable}. The point of
this section is not to take a side in this debate, but rather to suggest
that the discussion over the suitability of interaction coordinates may
be folded into the general discussion of regression models, with its
myriad concerns including over-fitting, model misspecification,
cross-validation, and so forth.

Furthermore, in a practical setting, unless there is a need for a
particular combination of interaction coordinates, the example above
suggests that an investigator dispense with the intricacies of the \wt{}
and its associated concepts and simply do a standard regression
analysis, including interactions when of if needed.

\section*{Conclusion}

The goal of this paper was to provide a concise mathematical development
of two forms of interaction coordinates.   In the \lastyn y section it
was suggested that the \wt{} may be viewed as part of the
orthogonalization procedure carried out as part of a linear regression
analysis with binary input variables.  In this light, debates on the
merits of considering interactions in modeling or prediction may be
merged with those arising in the general practice and principles of
regression analysis, leaving the matter to the statisticians with their
ample experience.

However, none of the above comments should be seen as implying that the
\wt{} is only of interest to mathematicians or curious theoreticians.
For an analogy, consider the Fourier transform.  To a signal engineer or
theoretical physicist, the Fourier transform is not merely a tool for
prediction and model building, but also a way of providing a useful
alternative representation of a signal or wave function. The intuition
which a signal engineer brings to a problem would be quite hobbled
without it.  In similar way, we suggest that the \wt{} can serve a similar
role to an investigation of the effects of binary treatment variables
and how their effect interact.  It provides a \textit{description} of
the results of a set of control-treatments groups, focusing not on the
numerical results for each group, but rather the interactions among the
various treatments.

\ack
Thanks to Kristina Crona for useful suggestions during the writing of
this article.

\pagebreak\label{listings}
\begin{lstlisting}[caption={Data frame}, label=DataFrame]
>>> df
     bit1  bit2  bit3  val
000     0     0     0  100
100     1     0     0   77
010     0     1     0  115
001     0     0     1  113
110     1     1     0   93
101     1     0     1   72
011     0     1     1  129
111     1     1     1   86
\end{lstlisting}
\vfil
\begin{lstlisting}[caption={Fitted linear model}, label={regression}]
>>> from statsmodels.api import OLS as ols
>>> linmod = ols.from_formula(
...         'val ~ bit1 + bit2 + bit3',
...         df)
>>> fitted = linmod.fit()
>>> fitted\
...         .get_prediction()\
...         .summary_frame()\
...         .loc[:, 'mean']
000    104.75
100     72.50
010    120.00
001    108.50
110     87.75
101     76.25
011    123.75
111     91.50
Name: mean, dtype: float64
>>> fitted.rsquared
0.9307382793073828
\end{lstlisting}
\vfil
\pagebreak
\begin{lstlisting}[caption={Fitted linear model: 2nd order}, label={twoway}]
>>> linmod2 = ols.from_formula(
...         'val ~ bit1 + bit2 + bit3 + '
...         'bit1:bit2 + bit1:bit3 + bit2:bit3',
...         df)
>>> fitted2 = linmod2.fit()
>>> fitted2\
...         .get_prediction()\
...         .summary_frame().loc[:, 'mean']
000     99.625
100     77.375
010    115.375
001    113.375
110     92.625
101     71.625
011    128.625
111     86.375
Name: mean, dtype: float64
>>> fitted2.rsquared
0.9995931099959311
\end{lstlisting}

\bibliographystyle{alpha}

\bibliography{walsh_transform}

\end{document}